%% file: main.tex
\documentclass[10pt,conference]{IEEEtran} 
\usepackage{enumitem}

\input{wok-commands.tex}

\usepackage{subcaption}

\usepackage[normalem]{ulem} 

\usepackage{xcolor}
\newcommand{\bing}[1]{{\color{blue}BW: #1}}

\begin{document}

\title{S-CAD: Selective Classical Advantage Distillation for Quantum Conference Key Agreement}

\author{\IEEEauthorblockN{Trevor Thomas}
\IEEEauthorblockA{\textit{School of Computing} \\
\textit{University of Connecticut}\\
Storrs CT, USA \\
trevor.thomas@uconn.edu}
\and
\IEEEauthorblockN{Walter O. Krawec}
\IEEEauthorblockA{\textit{School of Computing} \\
\textit{University of Connecticut}\\
Storrs CT, USA \\
walter.krawec@uconn.edu}
\and
\IEEEauthorblockN{Bing Wang}
\IEEEauthorblockA{\textit{School of Computing} \\
\textit{University of Connecticut}\\
Storrs CT, USA \\
bing@uconn.edu}
}

\maketitle
  
\begin{abstract}
Quantum conference key agreement (QCKA) protocols utilize GHZ states to establish shared group keys between multiple parties.  While previous work has shown that standard Classical Advantage Distillation (CAD) protocols can sometimes benefit QCKA performance, it was unknown if past results were asymptotically tight.  In this work, we design a new CAD protocol, {\em Selective  Classical Advantage Distillation (S-CAD)}, for QCKA,
which generalizes prior QCKA+CAD work and allows the parties to selectively enable or disable CAD.  We derive an asymptotic proof of security against general coherent attacks, which outperforms prior work.  Finally, we evaluate in a variety of simulated star network topologies, showing when S-CAD can help, and when it is best to disable CAD entirely.
\end{abstract}

\begin{IEEEkeywords}
Quantum conference key agreement, Classical Advantage Distillation, Quantum Networking
\end{IEEEkeywords}

\input{intro}

\input{prelim}
\input{protocol}

\input{asym-proof}

\input{eval}

\input{closing}
\balance
\bibliographystyle{unsrt}
\bibliography{paper/bib/groupCAD,conference/local-bib}

\end{document}

%% file: wok-commands.tex
\usepackage{amsthm,amsmath,amsfonts,braket,algorithm,graphicx,braket,balance,mathtools}

\theoremstyle{definition} 
\newtheorem {theorem} {Theorem}

\newtheorem {lemma} {Lemma}

\usepackage{ifthen}
\newboolean{fullversionflag}
\setboolean{fullversionflag}{true}
\newcommand{\fullversion}[2]{\ifthenelse{\boolean{fullversionflag}}{{#1}}{{#2}}}

\newcommand{\heading}[1]{\text{ }\newline\textbf{{#1:}}}

\newcommand{\num}{\#}

\newcommand{\bk}[1]{\braket{#1|#1}}
\newcommand{\kb}[1]{\left[#1\right]}

\newcommand{\st}{\text{ } : \text{ }}
\newcommand{\stt}{\text{ s.t. }}

\newcommand{\leakEC}{\texttt{leak}_{EC}}

\newcommand{\up}[1]{^{({#1})}}

\newcommand{\G}{{\mathcal{G}}}

\newcommand{\g}[2]{{ \left[\begin{smallmatrix} {#2}\\{#1} \end{smallmatrix} \right] }}

\newcommand{\B}{\mathcal{B}}
\newcommand{\CAD}{{\mathcal{C}}}
\newcommand{\cad}{{\mathcal{C}}}

\newcommand{\wforall}{{\text{ } \forall \text{ }}}

\floatname{algorithm}{Protocol}

\newcommand{\wok}[1]{{\color{blue}WOK: #1}}

%% file: intro.tex
\section{Introduction}
 Quantum conference key agreement (QCKA), the multi-user variant of quantum key distribution (QKD), aims  to establish a secret \emph{group} key, shared among a group of users.
Group keys are vital cryptographic resources for many distributed cryptographic applications such as  secure multiparty communication, distributed cryptographic applications, and quantum secure multi-party computation  (see \cite{murta2020quantum} for a review).  
While multiple, parallel, instances of 
pairwise QKD may be used to achieve conference key agreement,
QCKA protocols incur significantly less quantum  resources and lower classical communication overhead. 
In addition, there are other potential advantages to QCKA protocols such as higher key-rates in some scenarios \cite{epping2017multi}.
Existing studies (e.g., \cite{proietti2021experimental}) have also shown that QCKA protocols are experimentally feasible.

In general, QCKA, like QKD, is limited in distance, making quantum networks, capable of distributing GHZ states \cite{meignant2019distributing,fischer2021distributing,bugalho2023distributing,oslovich2025efficient}, a vital tool in their eventual large-scale adoption.
Many factors (e.g., loss and noise in quantum channels, short quantum memory decoherence time, and so on) can affect the quality of a quantum network, and hence any QCKA application supported by the network. 
Furthermore, it is likely that in a large network scenario, some users 
may experience significantly higher noise than others, leading to heterogeneous
conditions \cite{krawec2025quantum}. We therefore need to develop QCKA techniques that yield
efficient group key generation in both 
homogeneous settings (where the observed noise for all parties is equal) and 
heterogeneous settings (where some parties have a higher noise level than
others). The asymmetric setting is particularly challenging since a user with significantly inferior channel conditions than others can degrade the overall key rate for all the users.

In 
two-party 
QKD, it has been shown that \emph{classical advantage distillation} (CAD) \cite{maurer1993secret,bae2007key} can greatly improve the noise tolerance of the system.  CAD is a purely classical protocol that runs after the quantum communication stage is complete.  Using two-way classical communication, CAD will transform the given raw key (from the quantum stage of the QKD protocol) into a smaller version, but also one with less noise (and, thus, there will be less error correction leakage).  In high noise channels, this can greatly improve the efficiency of two-party QKD systems, while also increasing the overall noise tolerance significantly.

While CAD has been studied extensively in the two-party setting (see \cite{du2025advantage,sun2025enhancing,tan2020advantage,treplin2025finite,li2022improving} for some examples), in the group-key setting, however, it is less understood.  Applying CAD to a QCKA protocol is challenging for two reasons. First, it is not clear what the best CAD method is for multi-user settings. Indeed, the multi-party scenario presents many challenges and opportunities for protocol design. Second, a new proof of security is necessary, as the two-way classical communication to each party leaks additional information to an adversary.

Recently, in \cite{krawec2025quantum}, we developed a CAD protocol, inspired by the standard two-party version \cite{maurer1993secret,bae2007key}, for the multi-party scenario and applied it to the QCKA protocol introduced in \cite{grasselli2018finite}. That study was the first to  apply CAD techniques to this QCKA protocol.
In that work, we developed a finite key proof of security, and showed that CAD can potentially benefit the overall secret key generation rate and noise tolerances, depending on the number of parties and the network noise scenario.  Specifically, our earlier work showed a benefit when the number of parties was small and when the noise was heterogeneous.
For other settings, our earlier work did not outperform the standard ``No CAD'' case.  \emph{However, our earlier proof was not asymptotically tight}, as we demonstrate in this work, leaving it still an open question as to when and how CAD can be advantageously applied to group-key systems.  
 


In this work, we revisit the use of CAD for QCKA protocols.  Specifically, we design a new CAD protocol, \emph{Selective CAD (S-CAD)}, that takes into account multi-party scenarios and 
 allows some parties to selectively enable or disable CAD.  Our evaluations (\S\ref{sec:results}) show that this ability can greatly improve overall secret key generation rates for all parties.


We prove security of our protocol against general attacks in the asymptotic scenario and derive a general expression for the overall secret key rate.
While our proof is not the finite key scenario,
our methods can readily be used to derive a finite key bound using post selection techniques \cite{christandl2009postselection}, though we leave the exact details as future work.  Importantly, our asymptotic key-rate expression allows us to better understand the theoretical capabilities of CAD applied to QCKA protocols.



Finally, we perform an extensive evaluation of our S-CAD protocol in a variety of network scenarios.  We evaluate with a varying number of parties (ranging from three to eight), including both homogeneous and heterogeneous network scenarios.  
We show when S-CAD can benefit performance, and when it can hurt (and thus should be disabled).  We also show when it is important to disable CAD for some users, but not for  others, i.e., when to be selective.

Taken together, our work sheds new light on the importance of carefully designing classical post-processing techniques, specifically for quantum conference key agreement protocols.  It also sheds light on the importance of utilizing such protocols in networks where noise may not be homogeneous. 



%% file: prelim.tex
\subsection{Preliminaries}\label{section:prelim}

Let $\ket{\psi}$ be some (pure) quantum state.  We write $\kb{\psi}$ to mean $\ket{\psi}\bra{\psi}$.  We also define $P(\ket{\psi}) = \kb{\psi}$.  Given a density operator $\rho_{AB}$, we write $\rho_B$ to mean the result of tracing out the $A$ system; namely $\rho_B = tr_A\rho_{AB}$.  Similarly for three or more systems.

Given a classical random variable $X$, we write $H(X)$ to mean the Shannon entropy of $X$.  If $X$ takes values $x_i$ with probability $p_i$, then $H(X) = -\sum_ip_i\log_2p_i$.  We write $H(X|Y)$ to mean the conditional Shannon entropy, namely $H(X|Y) = H(XY) - H(Y)$.  Finally, we write $h(x)$, for $x\in[0,1]$, to mean the binary Shannon entropy function, namely $h(x) = -x\log_2x - (1-x)\log_2(1-x)$.

Given density operator $\rho_{AB}$, we write $H(AB)_\rho$ to mean the von Neumann entropy of $\rho$.  This is defined to be $H(AB)_\rho = -tr(\rho_{AB}\log_2\rho_{AB})$.  We write $H(A|B)_\rho$ to mean the conditional von Neumann entropy, namely $H(A|B)_\rho = H(AB)_\rho - H(B)_\rho$, where $H(B)_\rho$ is the von Neumann entropy of $\rho_B = tr_A\rho_{AB}$.  Note that, on classical systems, von Neumann entropy agrees with Shannon entropy.  If the context is clear, we may forgo writing the subscript.  Given a classical-quantum-classical state of the form $\rho_{AEZ} = \sum_zp_z\rho_{AE}\up{z}\otimes\kb{z}$, then it can be shown that:
\begin{equation}\label{eq:prelim:mixed}
H(A|EZ) \ge \sum_zp_zH(A|E,Z=z) = \sum_zp_zH(A|E)_{\rho\up{z}}.
\end{equation}

\heading{QCKA Security}
A QCKA protocol involves the establishment of a \emph{raw key} between Alice and each Bob (we assume there are $p \ge 2$ Bobs).  Let $A$ represent Alice's raw key register and $B\up{j}$ represent the $j$th Bob's raw key register.  Ideally, $A = B\up{1} = B\up{2} = \cdots =B\up{p}$; however, these raw keys are noisy, and an adversary potentially has quantum side information on them.  Thus, parties will need to run an error correction protocol (which leaks additional information) and a privacy amplification protocol.  The latter involves choosing a two-universal hash function and hashing the raw key to a smaller, but \emph{secret}, key of size $\ell$-bits.  For more information on the definition of security for a QKD protocol and the impact of error correction and privacy amplification, the reader is referred to \cite{renner2008security}.

In the asymptotic setting, one typically considers security against collective attacks  first, and then one may use post-selection techniques \cite{christandl2009postselection} to promote this to security against general attacks.  Let $\rho_{ABE}$ be the quantum state representing a single raw key bit for all parties (where $B$ can be decomposed into $B\up{1}\cdots B\up{p}$), while the $E$ system is arbitrary.  Then, if the raw key size is $n$-bits, the joint system, assuming collective attacks, is $\rho_{ABE}^{\otimes n}$.  Using results in \cite{epping2017multi,devetak2005distillation,renner2008security}, it can be shown that the asymptotic key-rate, after error correction and privacy amplification, is:
\begin{equation}\label{eq:prelim:keyrate}
\lim_{n\rightarrow \infty}\frac{\ell}{n} = H(A|E)_\rho - \max_jH(A|B_j).
\end{equation}
Thus, to prove security of a QCKA protocol, and to determine its efficiency, one requires good bounds on $H(A|E)_\rho$.

Later, when analyzing the security of our protocol, we will need to compute the entropy in a system $\rho_{AE}$ of the form:
\begin{equation}\label{eq:prelim:cq}
\rho_{AE} = \frac{1}{M}\kb{0}_A \otimes\left(\sum_i\kb{F\up{0}_i}\right) + \frac{1}{M}\kb{1}_A\otimes\left(\sum_i\kb{F\up{1}_i}\right),
\end{equation}
where the $\ket{F\up{a}_i}$ are arbitrary (potentially sub-normalized) pure states in Eve's ancilla and $M$ is a normalization term.  For such a state, the following theorem from \cite{krawec2017quantum} can be used to bound the von Neumann entropy:
\begin{theorem}\label{thm:prelim}
  (From \cite{krawec2017quantum}): Let $\rho_{AE}$ be written in the form of Equation \ref{eq:prelim:cq}.  Then it holds that:
  \begin{equation}\label{eq:hAEGeneral}
    H(A|E)_\rho \ge \sum_i\left(\frac{\braket{F\up{0}_i|F\up{0}_i} + \braket{F\up{1}_i|F\up{1}_i}}{M}\right) H_i,
  \end{equation}
  where:
  \begin{equation}
    H_i = h\left(\frac{\bk{F\up{0}_i}}{\bk{F\up{0}_i} + \bk{F\up{1}_i}}\right) - h\left(\nu_i\right)
  \end{equation}
  (note that if $\bk{F_i\up{a}} = 0$, then one sets $H_i = 0$)
  and, finally $\nu_i = $
  \begin{equation}
\frac{1}{2} + \frac{\sqrt{ \left(\bk{F\up{0}_i} - \bk{F\up{1}_i}\right)^2 + 4Re^2\braket{F\up{0}_i|F\up{1}_i}}}{2\left(\bk{F\up{0}_i} + \bk{F\up{1}_i}\right)}.
  \end{equation}
  
\end{theorem}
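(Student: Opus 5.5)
Although this result is quoted from \cite{krawec2017quantum}, here is how I would prove it. The approach has three steps: reduce to a mixture of two-dimensional problems indexed by $i$, apply Equation~\eqref{eq:prelim:mixed}, and compute each conditional entropy exactly.

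The first step is to purify Eve's side and append an index register. Let Eve hold an extra register $I$ and consider
\[
\sigma_{AEI} = \frac{1}{M}\sum_i \left(\kb{0}_A\otimes\kb{F\up{0}_i} + \kb{1}_A\otimes\kb{F\up{1}_i}\right)\otimes\kb{i}_I .
\]
Tracing out $I$ recovers $\rho_{AE}$. Since conditioning on more information cannot increase entropy (strong subadditivity), $H(A|E)_\rho \ge H(A|EI)_\sigma$. Because $I$ is classical, Equation~\eqref{eq:prelim:mixed} holds with equality here, giving
\[
H(A|EI)_\sigma = \sum_i p_i\, H(A|E)_{\sigma\up{i}}, \qquad p_i = \frac{\bk{F\up{0}_i}+\bk{F\up{1}_i}}{M}.
\]
In this expression $\sigma\up{i}$ is the normalized state $\frac{1}{Mp_i}\left(\kb{0}\otimes\kb{F\up{0}_i}+\kb{1}\otimes\kb{F\up{1}_i}\right)$. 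When $p_i=0$ the term is absent. When one of the two vectors vanishes, the bit $A$ is determined by $I$, so the conditional entropy is $0$, which matches the stated convention $H_i=0$.

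The second step handles a single $i$. Write $a=\bk{F\up{0}_i}$, $b=\bk{F\up{1}_i}$, and $c=\braket{F\up{0}_i|F\up{1}_i}$. Then
\[
H(A|E)_{\sigma\up{i}} = H(AE)-H(E) = h\!\left(\frac{a}{a+b}\right) - H(E)_{\sigma\up{i}} .
\]
This holds because $A$ is classical and, conditioned on $A$, Eve's state is pure. The marginal $\sigma\up{i}_E = \frac{1}{a+b}\left(\kb{F\up{0}_i}+\kb{F\up{1}_i}\right)$ has rank at most two, so its entropy is $h(\lambda)$, where $\lambda$ is its largest eigenvalue.

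The third step computes $\lambda$. The nonzero spectrum of $\sum_a \ket{F\up{a}}\bra{F\up{a}}$ equals that of the $2\times2$ Gram matrix $\begin{pmatrix} a & c\\ \bar c & b\end{pmatrix}$. Its eigenvalues, after dividing by $a+b$, are
\[
\frac12 \pm \frac{\sqrt{(a-b)^2+4|c|^2}}{2(a+b)} .
\]
This is where the main obstacle lies. The exact eigenvalue involves $|c|^2$, whereas the statement uses $Re^2 c \le |c|^2$. Since $h$ is decreasing on $[1/2,1]$, we have $h(\lambda)\le h(\nu_i)$, and so the claimed bound is still valid but weaker. To settle this I would check that replacing $|c|$ by $|\mathrm{Re}\,c|$ only loosens the inequality, which it does by monotonicity. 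Alternatively, one can remove the phase by a unitary on Eve's space when $c$'s phase is global, but monotonicity suffices.

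Summing $p_i H_i$ over $i$ then yields Equation~\eqref{eq:hAEGeneral}.
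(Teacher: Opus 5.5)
Your proof is correct and complete. The paper gives no proof of this theorem; it imports it from \cite{krawec2017quantum}. Your route is the standard argument for it: adjoin a classical index register and use strong subadditivity, compute each $H(A|E)_{\sigma\up{i}}$ exactly as $h\bigl(a/(a+b)\bigr)-h(\lambda_{\max})$ via the $2\times 2$ Gram matrix, and then weaken $|c|^2$ to $\mathrm{Re}^2 c$ using that $h$ is decreasing on $[1/2,1]$.

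One correction to your closing aside. A unitary on Eve's space preserves $c=\braket{F_i\up{0}|F_i\up{1}}$, so it cannot remove its phase. The correct observation is that $\rho_{AE}$ depends only on the projectors $\kb{F_i\up{a}}$, so the phase of $\ket{F_i\up{1}}$ is a free choice of representative. That is why the $|c|^2$ expression is the exact value and the stated $\mathrm{Re}^2$ form is only a representation-dependent weakening. As you say, monotonicity alone already suffices.
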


\heading{GHZ Basis}
We will utilize the GHZ basis extensively in this work, and so we take some time
to describe our notation system for this basis, which will be required to understand our technical proofs later.  A $p+1$ party GHZ state, consisting of $p+1$ qubits, is indexed by $x \in \{0,1\}^p$ and $y\in \{0,1\}$ and defined to be
\begin{equation}\label{eq:prelim:gxy}
  \ket{g(x;y)} = \frac{1}{\sqrt{2}}(\ket{0,x} + (-1)^y\ket{1,\bar{x}}),
\end{equation}
where $\bar{x}$ is the bit-wise complement of $x$ (i.e., $\bar{x} = x\oplus (1\cdots 1)$).
Clearly, this is an orthonormal basis of a $p+1$ qubit space.  Note that the ``$x$'' portion represents ``bit errors'' between parties (thus we will often call it the \emph{bit component}).  The $y$ portion represents a phase error. Its value can be determined by measuring all $p+1$ qubits of the GHZ state in the $X$ basis, and adding the result modulo two.
Thus we will often call the $y$ portion the \emph{phase component} of the GHZ state.

Sometimes (for instance, in Lemma \ref{lem:DelayedMeasure}), we may need to work with $n$ GHZ states (or a superposition of such states).  To help with indexing of such a state, we define the set $\B_p = \{0,1\}^p$, and then, given a ``word'' $x \in \B_p^n$, we mean $x = x_1x_2\cdots x_n$, with each $x_i$ a $p$-bit string, namely $x_i \in \B_p$.  If we ever require the $j$th
bit of the 
$i$th
``block'' of $p$-bits, we will write $x_{i,j}$.

With this notation defined, given $x \in \B_p^n$ and $y\in \{0,1\}^n$, we write $\ket{g(x;y)}$ to mean $\ket{g(x;y)} = \ket{g(x_1;y_1)}\ket{g(x_2;y_2)}\cdots \ket{g(x_n;y_n)}$.  In general, the size of $p$ will be fixed, and known, so there should be no confusion in how to decompose $x \in \B_p^n$.  Finally, we will define the \emph{GHZ alphabet} to be $\G_p = \{\g{x}{y} \st x\in\B_p, y\in\{0,1\} \}$.  
Specifically, a character of this alphabet is an element written $\g{x}{y}$, where the lower coordinate represents the bit-flip string of the GHZ state and the top coordinate represents the phase flip bit.  We can then define $\G_p^n$ to be the set of \emph{GHZ words}, such that $q \in \G_p^n$ can be written $q = \g{x}{y} = \g{x_1\cdots x_n}{y_1\cdots y_n}$, with $x \in \B_p^n$ (i.e., $x_i \in \{0,1\}^p$), and $y \in \{0,1\}^n$.  The single character $q_i$, then is the 
$i$th character, namely $\g{x_i}{y_i}$, which represents one of the $p+1$ qubit GHZ states, namely $\g{x_i}{y_i} = \ket{g(x_i;y_i)}$, where the latter is defined in Equation \ref{eq:prelim:gxy}.

For example, if $p=3$, then $\g{101}{1}$ represents the GHZ state $\ket{g(101;1)} = \frac{1}{\sqrt{2}}(\ket{0101} - \ket{1010})$.  While $\g{101,001}{1,0}$ represents \emph{two} GHZ states, namely $\ket{g(101; 1)}\otimes \ket{g(001;0)}$.

%% file: protocol.tex
\section{The Protocol}\label{sec:protocol}

In this section, we discuss how our S-CAD protocol operates and how it integrates into an existing QCKA protocol.  Below, we first discuss the QCKA protocol introduced by \cite{grasselli2018finite}.  Following this, we will discuss our new S-CAD protocol, which operates as a secondary stage to the QCKA protocol.  Note that our S-CAD protocol can be incorporated into other QCKA protocols, however one may need to derive a new security proof for any alternative combination.

\heading{QCKA Protocol}
We consider the QCKA protocol introduced in \cite{grasselli2018finite}.  The goal of a QCKA protocol is to establish a shared, group key, between $p+1$ parties.  We designate a ``leader'' party and call this party Alice; the remaining $p$ parties are typically called Bob$_1$ through Bob$_p$ (or ``the Bobs'').  This protocol involves the distribution of GHZ states, ideally of the form $\ket{g(0\cdots 0;0)}^{\otimes N}$, where $N$ is the number of rounds the protocol runs, and each GHZ state consists of $p+1$ qubits.  For each of these states, Alice holds one of the qubits, while the $p$ Bobs hold the remainder.
Thus, each party has $N$ qubits. 
Alice may create these states and distribute qubits to each Bob or, alternatively, a quantum network (potentially controlled by the adversary) may create these entangled states between parties.

For some of the GHZ states, Alice and the Bobs will measure in the $Z$ basis, record their outcomes, and the Bobs will send their results to Alice who computes the expected $Z$ basis error rate.  Normally, in the standard QCKA protocol \cite{grasselli2018finite}, one needs to estimate the values $Q_{AB_i}$, which represents the probability of a bit-flip error between the leader, Alice, and each Bob$_i$, separately (thus, one needs to estimate $p$ different $Z$ basis error rates).  To derive a tight asymptotic key-rate proof for S-CAD, we will actually require Alice to estimate $Q^Z_\Delta$ for all possible $\Delta\in\{0,1\}^p$, which will represent the probability that Alice and each Bob$_i$ has an error, for every $\Delta_i = 1$.  For instance, if $p=3$, then $Q^Z_{011}$ represents the probability that both Bob$_2$ and Bob$_3$ have an error (relative to Alice's result), while Bob$_1$ has the correct result (again, relative to the leader, Alice).

Besides the $Z$ basis error rate, we will also need to sample some of the distributed GHZ states in the $X$ basis.  For this, all parties measure in the $X$ basis and the Bobs send their outcomes to Alice who will sum these outcomes, modulo two.  Ideally, this sum should be zero (see Section \ref{section:prelim}) and any non-zero sum is counted as an error.  Specifically, Alice estimates $Q_X$, the probability of an $X$ basis error, that is, the probability that all $X$ basis measurements, when summed together, modulo two, adds to one.

The remaining 
system,
which we will assume to be $2n < N$ signals for simplicity, will be measured by all parties in the $Z$ basis.  This will constitute Alice and the Bobs' raw keys.  Normally, in QCKA, error correction and privacy amplification are now run.  However, before performing these steps, parties will run our S-CAD protocol.


\heading{S-CAD}
S-CAD takes inspiration from the two-party CAD protocol from \cite{bae2007key} and is an extended version of the CAD protocol used in \cite{krawec2025quantum}.  
It is specifically designed for multi-party scenarios and allows parties to selectively enable or disable CAD operations.
Let $\cad \in \{0,1\}^p$ be a flag register, such that if $\cad_i = 1$, then Bob$_i$ will have CAD ``ON'' and, otherwise CAD is turned ``OFF.''  We assume $\cad \ne 0\cdots 0$, since if that is the case, then one would simply run the standard QCKA protocol
and use the
results from \cite{grasselli2018finite} to determine the final secret key size.  Thus, to run S-CAD,
we assume at least one Bob has CAD turned ON.  Note that we also assume the setting of $\cad$ is public knowledge and, thus, Eve also knows its setting.

Now, first Alice and Bob randomly permute their remaining systems and select $n$ of them to be their ``Left'' half and $n$ of them to be their ``Right'' half.  For each $j = 1,2, \cdots, n$, Alice will compute the parity of the $j^{th}$ bit of her Left and Right raw key bit.  This parity is sent through the authenticated classical channel to each of the Bobs (this is not secret, thus this information leakage must be accounted for in the proof).

Now, if Bob$_i$ has CAD turned on (i.e., $\cad_i = 1$), then he will examine his $j^{th}$ Left and Right raw key bit.  If the parity of his outcomes match that of Alice, he will signal to Alice to ``Accept'' this $j^{th}$ round; otherwise, 
i.e., the parity does not match, 
he ``Rejects'' this round.  If $\cad_i=0$ (i.e., Bob$_i$ has CAD turned off), this Bob will always signal to ``Accept.''  If all Bobs signal to Accept, then the $j^{th}$ round is accepted.  If at least one Bob  signals to Reject, then that round is rejected by all parties (including those Bobs who have CAD turned OFF).

For all rounds that were accepted, all parties discard the Right portion of the raw key, and keep only the Left.  This will be their new raw key.  They will then run error correction and privacy amplification on this new, shorter, raw key. 

The reason that this may produce a better secret key rate (depending on the channel noise) is due to the fact that, conditioned on acceptance, the error rate should be lower in the new raw key for any Bob that had CAD on.  Thus, error correction will leak less information.

Note that the decision to use S-CAD or not and the exact setting for $\cad$ can be determined \emph{after} parties have measured all their GHZ states and have determined the error rates in the channel.  Indeed, one can compute the expected key-rate for all scenarios (using our main result, below, in Section \ref{sec:asymproof}), and determine which setting produces the highest key-rate.  Thus, the parties can always choose the optimal settings to maximize their key-rate.

\heading{Delayed Measurement S-CAD}
Of course, the quantum communication stage of the QCKA protocol, where GHZ states are distributed to Alice and the Bobs, can be equivalently represented by an entanglement-based protocol, where Eve, the adversary, will produce the initial GHZ states used by the protocol (and these will
be potentially entangled with her ancilla).

Later, in our proof, it will be beneficial to analyze a \emph{delayed-measurement} version of the S-CAD protocol, which is essentially a purification of the S-CAD protocol.  Here, instead of Alice and each Bob measuring their system (after sampling) in the $Z$ basis, followed by computing and subsequently
checking the parity of their bits, they will perform all operations coherently using CNOT operations, storing their parity checks in additional ancilla registers (these registers will model the message that is typically sent, classically).  Later, measuring these ancilla registers will produce an equivalent post-measured state, compared to the actual protocol.  In particular, we will create two additional registers: $M$ and $rej$, both initialized to $\ket{0\cdots 0}$.  The first will represent the parity message that Alice sends and it will consist of $n$ qubits (one for each bit of the classical message).  The second will represent the message to ``accept'' or ``reject'', sent from each Bob, thus it will consist of $np$ qubits.  Note that, even if CAD is switched off for a particular Bob, that Bob will still send the message ``accept'' in our protocol.

For each round $i = 1, \cdots, n$, Alice will first apply a Double-Control NOT (DCNOT) gate, controlled on her Left and Right qubits for that round, and targeting the $i$th empty ancilla in $M$.  A DCNOT gate is simply two applications of a CNOT gate.  Specifically, given basis state $\ket{x,y,z}$, where $x,y,z\in\{0,1\}$, then if the first two qubits are the two controls, and the third qubit is the target, it holds that $DCNOT\ket{x,y,z} = \ket{x,y,z\oplus x\oplus y}$.  This action models Alice's parity computation and message (since her $M$ register is initially $0$).  Later she can measure her $M$ register and her Left and Right qubits, and the resulting post measured state will be identical to the actual protocol.

Now, each Bob will perform a parity check by performing DCNOT operations on their qubits, and checking to see if it matches the sent message (in the $M$ ancilla).  A flag of $1$ will be placed in the corresponding $rej$ register if a particular Bob rejects.  It is not difficult to show that performing these quantum operations and then, at a later time, measuring all registers in the computational basis, is equivalent to measuring in the computational basis first, and then performing the XOR operations classically.  Later, in our proof, it will be convenient to analyze this delayed measurement version, keeping the system as a pure state for as long as possible.  We will analyze the behavior of this protocol in Section \ref{sec:technical:delayed-cad}.


%% file: asym-proof.tex
\section{Security Proof}
\label{sec:asymproof}

We now prove security of the S-CAD QCKA protocol.  We will actually analyze the delayed measurement version, as discussed in Section \ref{sec:protocol}.  Before stating our main result, we will first prove a 
lemma that shows how the delayed measurement version acts on GHZ basis states.  This will be useful later, when analyzing the actual protocol state.  We also derive the result on general GHZ states, and so this result may be useful for future multi-party cryptographic protocols and security proofs, which rely on multiple GHZ states.

\input{technical}

\subsection{Bounding Eve's Uncertainty}


We are now in a position to state and prove our main result, namely an asymptotic key-rate bound for the QCKA protocol with S-CAD.  We will first prove security against collective attacks, where Eve attacks each round of the protocol independently and identically.  Later, we will promote this to security against general attacks.  In particular, for every round of the protocol, Eve will create an arbitrary quantum state $\ket{\psi}_{ABE}$, where the $A$ register is a single qubit, and the $B$ portion consists of $p$ qubits (one for each Bob).  The Eve system is arbitrary.  We will assume Eve has a perfect quantum memory to store all her ancilla for all rounds of the protocol.

Security against collective attacks involve bounding the von Neumann entropy, as discussed in Section \ref{section:prelim}.  In particular, after the protocol runs, a raw key is established, where a single bit of the raw key can be modeled by the density operator $\rho_{ABEM}$, where $M$ represents all messages sent, publicly, over the authenticated channel (in particular, the parity announcements; the Accept/Reject signals are also sent, however if we condition on a bit being distilled, all Bob's signal to Accept, so there is no information to be gained in this case).  Note that this is the conditional state, after parties choose to accept;  also note that this is a single bit of the raw key, which actually will require two rounds of the protocol to derive.  In particular, two copies of $\ket{\psi}_{ABE}$ will be used to produce $\rho_{ABEM}$ using the S-CAD protocol. 

Our main result is stated below, in Theorem \ref{them:main}, which bounds Eve's uncertainty on Alice's final raw key bit, after CAD is run.  Note that we also must bound her uncertainty based on all public discussion (in particular, the parity announcements).

\begin{theorem}\label{them:main}
Let $\ket{\psi}_{ABE}$ be the state produced by Eve on a single round, and let $\rho_{ABEM}$ be the resulting density operator modeling a single raw key bit after S-CAD runs and all parties accept, as discussed above.  Then, assuming collective attacks, if the observed $Z$ basis error rate is $Q^Z_\Delta$ for all $\Delta\in\{0,1\}^p$, and the observed $X$ basis error rate is $Q_X$, it holds that $H(A|EM)_\rho$ is lower bounded by:
\begin{equation}
   H(A|EM)_\rho \ge \min_{\{\nu_\Delta\}} \left(p_a -\sum_{(b,c)\in A_{\cad}} Q^Z_{b}Q^Z_{c} h\left(\tau_{b,c}\right)\right)
 \end{equation}
 where $A_\cad$ is defined in Equation \ref{acadd};
$p_a$ is the probability of accepting a two-round block, and is defined in Equation \ref{eq:thm3};
and finally $\tau_{b,c}$ is defined in Equation \ref{eq:tau}.  The above expression is minimized over all $\nu_\Delta$, where $\Delta \in \{0,1\}^p$, subject to the  constraints that $0\le\nu_\Delta \le Q_\Delta^Z$ and $\sum_{\Delta\in\{0,1\}^p}\nu_\Delta = Q_x$.
\end{theorem}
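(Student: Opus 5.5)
We will work with the delayed-measurement version of S-CAD, and the first step is to put Eve's single-round state in the GHZ basis. We write
\[
\ket{\psi}_{ABE}=\sum_{x\in\B_p}\sum_{y\in\{0,1\}}\ket{g(x;y)}\ket{E_{x,y}},
\]
with the $\ket{E_{x,y}}$ sub-normalized. The observed statistics then constrain these vectors. First, $Q^Z_\Delta=\bk{E_{\Delta,0}}+\bk{E_{\Delta,1}}$, since measuring all parties in $Z$ reveals the bit component $x=\Delta$. Second, $Q_X=\sum_\Delta\bk{E_{\Delta,1}}$, since the $X$ parity reveals the phase component. Setting $\nu_\Delta:=\bk{E_{\Delta,1}}$ therefore gives exactly the constraints $0\le\nu_\Delta\le Q^Z_\Delta$ and $\sum_\Delta\nu_\Delta=Q_X$ that appear in the statement. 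The final bound is a minimum over these variables, because Eve may distribute the phase errors among the bit patterns arbitrarily.

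Next we take two copies, $\ket{\psi}^{\otimes 2}$ (Left and Right), and apply the delayed S-CAD lemma from Section~\ref{sec:technical:delayed-cad}. This lemma describes how the DCNOT parity computation acts on a pair of GHZ basis states $\ket{g(b;y_1)}\ket{g(c;y_2)}$. Alice's parity message is her Left bit XOR her Right bit. Bob$_i$ rejects exactly when $b_i\oplus c_i=1$ and $\cad_i=1$. So the block is accepted iff $(b\oplus c)\wedge\cad=0$; this defines the set $A_\cad$ of accepted pairs $(b,c)$ (Equation~\ref{acadd}), and $p_a=\sum_{(b,c)\in A_\cad}Q^Z_bQ^Z_c$ (Equation~\ref{eq:thm3}).

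We then condition on acceptance and on the parity message $M=m$, and measure Alice's Left qubit. This puts the state in the form of Equation~\ref{eq:prelim:cq}. Concretely, for each accepted pair $(b,c)$ we collect the Eve-side vectors $\ket{F^{(a)}_{b,c}}$ associated with Alice's key bit $a$. Each such vector is a combination of $\ket{E_{b,y_1}}\ket{E_{c,y_2}}$, with the relative sign $(-1)^{y_1+y_2}$ and the appropriate bit complements coming from the GHZ structure. By symmetry of the GHZ states under $a\mapsto a\oplus1$ (which complements the bits), we get $\bk{F^{(0)}_{b,c}}=\bk{F^{(1)}_{b,c}}$. The first entropy term in Theorem~\ref{thm:prelim} is therefore $h(1/2)=1$, and summing the weights gives $p_a$. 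By Equation~\ref{eq:prelim:mixed}, averaging over the classical $M$ only helps, so it suffices to handle each message value separately.

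The second entropy term requires computing $\mathrm{Re}\braket{F^{(0)}_{b,c}|F^{(1)}_{b,c}}$, which we expect to be the main obstacle. Expanding the overlap, the cross terms should reduce to quantities like $\bk{E_{b,0}}\bk{E_{c,0}}$ plus $\bk{E_{b,1}}\bk{E_{c,1}}$, minus the mixed terms, giving roughly $(Q^Z_b-2\nu_b)(Q^Z_c-2\nu_c)$. However, off-diagonal overlaps $\braket{E_{b,0}|E_{b,1}}$ may also appear. These must either cancel, thanks to the symmetry between the Left and Right rounds, or be bounded, using Cauchy–Schwarz together with the fact that $h(\nu)$ is monotone in the overlap. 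We would first try to show that they cancel after symmetrizing over the random Left/Right assignment. The result is $\nu_{b,c}=\tau_{b,c}$ as in Equation~\ref{eq:tau}, normalized by $Q^Z_bQ^Z_c$. Substituting into Theorem~\ref{thm:prelim} yields
\[
H(A|EM)\ge p_a-\sum_{(b,c)\in A_\cad}Q^Z_bQ^Z_c\,h(\tau_{b,c}).
\]
Minimizing over the admissible $\{\nu_\Delta\}$ gives the theorem for collective attacks. Post-selection then extends the asymptotic rate to general attacks.
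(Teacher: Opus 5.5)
Your outline follows the paper's proof: GHZ expansion, the two constraints, Lemma \ref{lem:DelayedMeasure}, the acceptance set, then Theorem \ref{thm:prelim}. There is, however, a genuine gap, exactly at the step you flag as the main obstacle: you never reduce to a GHZ-diagonal state.

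The paper's proof first assumes that $tr_E\kb{\psi}_{ABE}$ is diagonal in the GHZ basis. The parties can enforce this by a symmetrization step that leaves $Q^Z_\Delta$ and $Q_X$ unchanged, and Eve holding a purification of the diagonal state is the worst case. Consequently the vectors $\ket{E_{x,y}}$ are mutually \emph{orthogonal}. You keep the $\ket{E_{x,y}}$ arbitrary, and then two of your steps fail. For message $m$ one has $\ket{F^{(a)}_{b,c}}=\ket{u^{(a)}_b}\otimes\ket{u^{(a\oplus m)}_c}$ with $\ket{u^{(s)}_b}=\ket{E_{b,0}}+(-1)^s\ket{E_{b,1}}$, so
\[
\bk{u^{(s)}_b}=Q^Z_b+(-1)^s\,2\,\mathrm{Re}\braket{E_{b,0}|E_{b,1}} .
\]
Hence $\bk{F^{(0)}_{b,c}}\neq\bk{F^{(1)}_{b,c}}$ in general, and your appeal to a symmetry under $a\mapsto a\oplus1$ is false. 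For example, if Eve sends $\cos\theta\ket{g(0\cdots0;0)}+\sin\theta\ket{g(0\cdots0;1)}$, Alice's bit is biased. Likewise,
\[
\mathrm{Re}\braket{F^{(0)}_{b,c}|F^{(1)}_{b,c}}=(Q^Z_b-2\nu_b)(Q^Z_c-2\nu_c)\mp4\,\mathrm{Im}\braket{E_{b,0}|E_{b,1}}\,\mathrm{Im}\braket{E_{c,0}|E_{c,1}},
\]
with the sign depending on $m$. The extra terms are coherences within a single round and are symmetric in $(b,c)$, so randomizing the Left/Right assignment cannot cancel them. Bounding them by Cauchy--Schwarz instead leaves additional parameters for Eve to optimize over. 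You would then not arrive at $\tau_{b,c}$, which depends only on $Q^Z_\Delta$ and $\nu_\Delta$.

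The missing idea is the GHZ-diagonal reduction itself. Once the $\ket{E_{x,y}}$ are orthogonal, $\bk{F^{(a)}_{b,c}}=Q^Z_bQ^Z_c$ for both values of $a$. The overlap $\braket{F^{(0)}_{b,c}|F^{(1)}_{b,c}}=(Q^Z_b-2\nu_b)(Q^Z_c-2\nu_c)$ is then real, which is exactly what Theorem \ref{thm:prelim} needs. Orthogonality is also what lets the paper map $m=1$ to $m=0$ via the unitary $U_m$, though your per-message treatment via Equation \ref{eq:prelim:mixed} works equally well once orthogonality holds.

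A smaller point concerns normalization. In Theorem \ref{thm:prelim} the weights carry the factor $1/M$ with $M=2p_a$, so they sum to $1$, not to $p_a$. The direct output is therefore $1-\frac{1}{p_a}\sum_{(b,c)\in A_\cad}Q^Z_bQ^Z_c\,h(\tau_{b,c})$, which is what the paper's proof derives. The displayed bound is $p_a$ times this, and it follows because $0<p_a\le1$ and $H(A|EM)_\rho\ge0$, not because the weights sum to $p_a$.
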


\begin{proof}


  Let $\ket{\psi}_{ABE}$ be the state created by Eve on a single round of the quantum communication stage of the protocol.  Using results from \cite{epping2017multi,dur1999separability,dur2000classification}, we may assume the state $tr_E\kb{\psi}_{ABE}$ is diagonal in the GHZ basis (Alice and the Bobs may apply a symmetrization step to enforce this).  Thus, it is to Eve's advantage that she has a purification of this diagonal state.  We therefore may assume, to Eve's advantage, that the state is of the form:
  \begin{equation}
    \ket{\psi}_{ABE} = \sum_{\g{x}{y}\in\G_p}\ket{g(x;y)}\ket{E_{x,y}},
  \end{equation}
  where $\ket{E_{xy}}$ are sub-normalized, orthogonal states.  Given the observed error rates $Q^Z_\Delta$ and $Q_X$, it holds that:
  \begin{align}
    Q^Z_\Delta &= \bk{E_{\Delta, 0}} + \bk{E_{\Delta,1}}, \wforall \Delta\in\{0,1\}^p\label{eq:qdelta-constraint}\\
    Q_X &= \sum_{\Delta\in\{0,1\}^p}\bk{E_{\Delta,1}}.
  \end{align}
  The first constraint (for all $\Delta$) is easy to see; the second follows from basic properties of GHZ states, as discussed in Section \ref{section:prelim}.

We first model the delayed measurement CAD protocol on two copies of this state (since our S-CAD protocol requires two rounds).  Consider $\ket{\psi}_{ABE}^{\otimes 2}$ which, after permuting subspaces, can be written in the form:
\begin{equation}
  \ket{\psi}_{ABE}^{\otimes 2} \cong \sum_{\g{x}{y},\g{z}{w}\in\G_p}\ket{g(x;y)}\ket{g(z;w)}\ket{E_{x,y,z,w}},
\end{equation}
where we define $\ket{E_{x,y,z,w}} = \ket{E_{x,y}}\otimes\ket{E_{z,w}}$.  Note that the GHZ state indexed by $x$ and $y$ represent the ``Left'' system while the other GHZ state, indexed by $w$ and $z$, is the ``Right'' system for the S-CAD protocol (see Section \ref{sec:protocol}).


We now apply Lemma \ref{lem:DelayedMeasure} to the above state, which models the delayed measurement version of S-CAD.  The resulting state (before any measurements are performed), is:
\begin{equation}
    \begin{split}
    \sum_{\g{x}{y},\g{z}{w}\in \G_p}\ket{(x\oplus z) \wedge \CAD}_{rej}
    \otimes&\frac{1}{\sqrt{2}}\sum_{m=0}^1(-1)^{wm}\ket{m}_M  \\
    \otimes& \ket{g(x, m, z\oplus m^p; y\oplus w)}_{AB}\\
    \otimes&\ket{E_{x,y,z,w}}
   \end{split}
\end{equation}


Now, parties will measure the $rej$ register; if at least one qubit is in the state one, then this round is rejected.  Since we are only interested in analyzing the entropy in an accepted state, we condition on all parties signaling to accept (i.e., $(x\oplus z) \wedge \CAD = 0\cdots 0$).
Conditioned on accepting, the post-measured state collapses to:
\begin{align}
    \frac{1}{p_a}\sum_{(x,z) \in A_\CAD} P\big(\frac{1}{\sqrt{2}}\sum_{m,y,w\in\{0,1\}}&(-1)^{w\cdot m}\ket{m}_M  \\
                                                                  &\otimes \ket{g(x, m, z\oplus m^p; y\oplus w)}_{AB}\notag\\
                                                                  &\otimes \ket{E_{x,y,z,w}} \big)\notag
\end{align}
where $p_a$ is a normalization term, namely the probability of accepting, and:
\begin{align}
  A_\CAD&=\{(x,z) \in \{0,1\}^p\times\{0,1\}^p \st (x \oplus z) \wedge \CAD = 0\cdots 0\}\notag\\
        &= \{(x,z)\st x_i = z_i \text{ whenever } \cad_i = 1\}\label{acadd}
\end{align}
is the set of bit-error conditions that lead to an acceptance.  From the above, and the fact that Eve's vectors are orthogonal, it can be shown, using Equation \ref{eq:qdelta-constraint}, that:
\begin{align}
  p_a &=\sum_{(x,z)\in A_\cad}Q^Z_x Q^Z_z.\label{eq:thm3}
\end{align}
The above expression for $p_a$ also makes intuitive sense, since Bob$_i$ (with $\cad_i=1$) will accept only if both Left and Right GHZ states result in an error (relative to Alice, so $x_i=z_i=1$) or no error ($x_i=z_i=0$).  Of course, for those Bobs who have $\cad_i = 0$, they will always accept, and so no constraints on their measurement outcomes are placed.


We now measure all systems in the $Z$ basis, and trace out the Bobs' systems, as we're only interested, currently, in computing a bound on $H(A|EM)$.  We also trace out Alice's Right (her second) qubit which is discarded by S-CAD. Equivalently, we may simply discard the Bobs' systems, now, and measure Alice in the $Z$ basis (discarding Alice's second qubit measurement).  This measurement, leads to a mixed state, we denote $\rho_{AEM}$, and is found to be $\rho_{AEM} = \frac{1}{2}\sum_{m\in\{0,1\}}\kb{m}_M\otimes\rho_{AE}\up{m}$, where $\rho_{AE}\up{m}$ is the quantum state conditioned on message $m$ being sent by Alice.  This is readily computed as:
\begin{align*}
  &\rho_{AE}\up{m}=\frac{1}{2p_a} \sum_{a=0}^1[a]_A \otimes\\
  &\sum_{(x,z)\in A_\CAD} P\left(\sum_{y,w\in\{0,1\}}  (-1)^{(y\oplus w) \cdot a} (-1)^{w\cdot m}\ket{E_{x,y,z,w}}_E \right)
  \label{eq:rhoAE}
\end{align*}

Now, let's focus on a particular message $m$ and compute $H(A|E,M=m)_\rho = H(A|E)_{\rho\up{m}}$.  Equation \ref{eq:prelim:mixed} will allow us to then use this to bound the total entropy.  Since $m$ is public knowledge, and in particular, known to Eve, she (Eve) may apply the following unitary operator, $U_m$, to her ancilla state:
\begin{equation}
U_m \ket{E_{x,y,z,w}} = (-1)^{w\cdot m}\ket{E_{x,y,z,w}}.
\end{equation}
That this is unitary is trivial to show, since $\ket{E_{x,y,z,w}}$ are orthogonal states.  Furthermore, since unitary changes in basis do not affect entropy, we may apply this operator to Eve's system without affecting her uncertainty.  Note that, after applying this operator, we have $U_m\rho_{AE}\up{m}U_m^* = \rho_{AE}\up{0}$.  In this sense, while the message may leak additional information to Eve, each possible message will leak the same amount of information (which, intuitively, makes sense).  Thus, we actually analyze the entropy in the case $m=0$, since $H(A|E,M=0) = H(A|E,M=1)$.

Now, since $\ket{E_{x,y}}$ is a sub-normalized state, we may write $\ket{E_{x,y}} = \sqrt{\lambda_x^y}\ket{f_{x}^y}$, where $\lambda_x^y$ are non-negative real numbers (any other phase may be absorbed into the $\ket{f_x^y}$ vector).  Of course, we have $\bk{E_{x,y}} = \lambda_x^y$ (and, thus, the constraints from Equation \ref{eq:qdelta-constraint} can be translated to these $\lambda_x^y$ values).  We also have $\ket{E_{x,y,z,w}} = \sqrt{\lambda_{x}^{y}\lambda_z^w} \ket{f_{x}^{y}f_z^w}$.

Define the state:
\begin{align}
  \ket{F_{x,z}^{\up{a}}} &= \sum_{y,w\in\{0,1\}}(-1)^{(y\oplus w) \cdot a} \ket{E_{x,y,z,w}}_E\notag\\
                        &= \sum_{y,w\in\{0,1\}}(-1)^{(y\oplus w) \cdot a}\sqrt{\lambda_x^y\lambda_z^w}\ket{f_x^yf_z^w}.\label{eq:new-F}
\end{align}
Then we may write $\rho_{AE}\up{0} = \frac{1}{2p_a}\sum_a\kb{a}\sum_{(x,z)\in A_\cad}\kb{F_{x,z}\up{a}}$, thus allowing us to use Theorem \ref{thm:prelim}.  Of course, for this theorem to be applicable, we also require $\bk{F_{x,z}\up{a}}$ for all $(x,z)\in A_\cad$ and $a\in\{0,1\}$.  We also require $\braket{F_{x,z}\up{0}|F_{x,z}\up{1}}$.  Since Eve's ancilla vectors are orthogonal, using Equation \ref{eq:new-F}, these are readily computed:
\begin{align}\label{eq:fxza}
  \braket{F_{x,z}^{(a)}|F_{x,z}^{(a)}} &= \sum_{y,w\in\{0,1\}}\lambda_x^y\lambda_z^w = (\lambda_x^0+\lambda_x^1)(\lambda_z^0+\lambda_z^1) \\
                                       &= Q^Z_x Q^Z_z, \label{eq:innerprod1}
\end{align}
where, for the last equality, we used Equation \ref{eq:qdelta-constraint}.  Finally, we also have:
\begin{align}
\braket{F_{x,z}\up{0}|F_{x,z}\up{1}} = \sum_{y,w\in\{0,1\}} (-1)^{y\oplus w}\lambda_x^y\lambda_x^z = (\lambda_x^0-\lambda_x^1)(\lambda_z^0-\lambda_z^1)\label{eq:innerprod2}
\end{align}

Now, since $\lambda_\Delta^0 + \lambda_\Delta^1 = Q^Z_\Delta$ for all $\Delta\in\{0,1\}^p$ (see Equation \ref{eq:qdelta-constraint}), let $\nu_\Delta = \lambda_\Delta^1$ and, then, $\lambda_\Delta^0 = Q^Z_\Delta - \nu_\Delta$.  With this notation, Equation \ref{eq:innerprod2} simplifies to:
$\braket{F_{x,z}\up{0}|F_{x,z}\up{1}} = (Q^Z_x-2\nu_x)(Q^Z_z-2\nu_z).$
This gives us enough information to utilize Theorem \ref{thm:prelim} to bound $H(A|E,M=0)$ (which, as discussed above, is the same as $H(A|E,M=1)$). 
As shown in Equation \ref{eq:fxza}, $\braket{F_{x,z}^{(0)}|F_{x,z}^{(0)}} = \braket{F_{x,z}^{(1)}|F_{x,z}^{(1)}}$; note also that $p_a = \frac{1}{2}\sum_{x,z,a}\bk{F_{x,z}\up{a}}$. So after some simplification: 
\begin{equation*}
  H(A|E,M=0) \geq \min_{\{\nu_\Delta\}} \left(1 - \frac{1}{p_a}\sum_{(x,z)\in A_\CAD} Q^Z_{x} Q^Z_{z} h(\tau_{x,z})\right)
\end{equation*}
where:
\begin{equation}\label{eq:tau}
  \tau_{x,z} = \frac{1}{2}\left(1 - \frac{|(Q^Z_x-2\nu_{x})(Q^Z_{z}-2\nu_{z})|}{Q^Z_{x} Q^Z_{z}}\right)
\end{equation}
Note that we minimize over all $\nu_\Delta \in [0, Q^Z_\Delta]$, subject to the constraint that $\sum_\Delta\nu_\Delta = Q_X$ (both constraints follow from Equation \ref{eq:qdelta-constraint}), since we must assume Eve's original attack produces a state that minimizes her uncertainty.  This, combined with Equation \ref{eq:prelim:mixed}, completes the proof.
\end{proof}

\subsection{Final Key-Rate Derivation}\label{sec:asymproof:final-keyrate}

Theorem \ref{them:main} allows us to bound the conditional entropy in the state conditioned on all parties accepting.  To determine the final secret key rate, we will also need to determine the maximal error correction leakage.  Let $Q_{AB_j}$ be the bit error rate between Alice and Bob$_j$ 
before CAD is run, and let $Q_{AB_j}^{CAD}$ be the bit error rate between Alice and Bob$_j$ after the S-CAD protocol runs.
It is easy to see that:
$Q_{AB_j} = \sum_{\substack{\Delta\in\{0,1\}^p\\\stt \Delta_j=1}} Q_\Delta^Z.$
As shown in \cite{epping2017multi}, the error correction leakage term will be:
\begin{equation}\label{eq:asymproof:leakec}
  \leakEC = \max_jH(A|B_j) = \max_j h(Q_{AB_j}^{CAD}).
\end{equation}
where the maximum is over all $j=1,2\cdots, p$ (over all Bobs).

Returning to the actual key-rate expression, let $N$ be the total number of signals sent, and let $p_a$ be the probability of accepting any particular block of two (where $p_a$ is defined in Equation \ref{eq:thm3}).  Then, the total expected size of the raw key $n$, after S-CAD runs, will be $n = \frac{1}{2}p_aN$.  By Equation \ref{eq:prelim:keyrate}, we have the final key-rate in the asymptotic setting, then, is:
\begin{equation}
  \lim_{N\rightarrow\infty}\frac{\ell}{N} = \frac{p_a}{2}\left(H(A|EM) - \max_jh(Q_{AB_j}^{CAD})\right),
\end{equation}
where $H(A|EM)$ can be bounded using our Theorem \ref{them:main}.

Now, in general $Q_{AB_j}^{CAD}$ is an observable quantity that parties can estimate through standard sampling methods.  Since we are in the asymptotic setting, this sampling will not hurt the users in terms of efficiency.  For our evaluations (\S\ref{sec:results}), we will compute expected values for these based on the assumption that links in the star network generate noise independently. This is done just to simplify our evaluations. In practice, one would simply observe the actual values and use them in Equation \ref{eq:asymproof:leakec}.  If Bob$_j$ runs CAD (i.e., $\cad_j=1$), then it is easy to see that:
\begin{equation}
  Q_{AB_j}^{CAD} = \frac{\left(Q_{AB_j}\right)^2}{p_a}.
\end{equation}
On the other hand, if $\cad_j=0$, then since the links are assumed to be independent, it is expected that $Q_{AB_j}^{CAD} = Q_{AB_j}$ (since the probability of acceptance is not dependent on this link).  

Finally, we note that our result above was derived for collective attacks.  However, standard post-selection techniques \cite{christandl2009postselection} can be used to promote our analysis to general attacks.  One may also take advantage of the asymptotic equipartition property \cite{tomamichel2009fully}, along with post-selection techniques, to promote our analysis to the finite key setting.  However, there, one must be careful to work out the appropriate sampling errors.  We leave this as future work.


%% file: technical.tex
\subsection{Delayed Measurement S-CAD}\label{sec:technical:delayed-cad}


Recall that the delayed measurement CAD protocol operates after Alice and Bob sample their systems, but before measuring the remaining (unsampled) qubits in the $Z$ basis. The quantum state, in such an instance, may be represented as a mixture of pure states, each a superposition of GHZ states that can be written in the form:
\begin{align}\label{eq:technical:ghz-before}
  \sum_{\g{x}{y},\g{z}{w}\in J}&\ket{g(x_1;y_1)}\ket{g(z_1;w_1)}\cdots \ket{g(x_n;y_n)}\ket{g(z_n;w_n)}\\
  &\otimes\ket{E_{x,y,z,w}},\notag
\end{align}
where $J \subset\G_p^n\times \G_p^n$ (recall, this is the GHZ alphabet defined in Section \ref{section:prelim}).  We use $\g{x}{y}$ to index the Left GHZ states and $\g{z}{w}$ to index the Right GHZ states (recall, S-CAD divides the signal randomly into Left and Right portions).

The following Lemma analyzes the transformation of Equation \ref{eq:technical:ghz-before}, after running the delayed measurement protocol, but before any measurements are made.
\begin{lemma}
  Let $\ket{\psi}$ be a state of the form shown in Equation \ref{eq:technical:ghz-before}.  Then, after running the delayed measurement S-CAD protocol, but before measuring any system, the state evolves to:
  \begin{align}
  \begin{split}
    &\sum_{\g{x}{y},\g{z}{w}\in J}\ket{(x_1\oplus z_1) \wedge \CAD}_{rej_1}\cdots \ket{(x_n\oplus z_n) \wedge \CAD}_{rej_n}\\
    \otimes&\frac{1}{\sqrt{2^n}}\sum_{m\in\{0,1\}^n}(-1)^{w\cdot m}\ket{m}_M  \\
    \otimes& \ket{g(x_1, m_1, z_1\oplus m_1^p; y_1\oplus w_1)}_{AB}\\
    \vdots&\\
    \otimes& \ket{g(x_n, m_n, z_n\oplus m_n^p; y_n\oplus w_n)}_{AB}\\
    \otimes&\ket{E_{x,y,z,w}},
   \end{split}
  \end{align}
  where, above, $a\wedge b$ is the bit-wise AND operation; $m_i^p$ is a $p$-bit string consisting of the bit $m_i$ repeated $p$ times; and $w\cdot m$ is the bit-wise modulo two dot product.  Furthermore, by $x_i,m_i,z_i\oplus m_i^p$, we mean the bit-string concatenation of $x_i$ with $m_i$ with $z_i\oplus m_i^p$ (thus creating a $2p+1$ bit string).  Note that each individual GHZ state in the above consists of $2(p+1)$ qubits (representing Alice's and each Bobs' qubits for a particular CAD block which is two rounds of the protocol).
  \label{lem:DelayedMeasure}
\end{lemma}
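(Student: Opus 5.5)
The plan is a direct computation on basis states. All operations in the delayed-measurement S-CAD protocol (Alice's DCNOTs, the Bobs' parity checks, and the writes into $M$ and $rej$) act only on Alice's and the Bobs' qubits and the fresh ancillas. They never touch Eve's register, and they act round-by-round: round $i$ touches only the Left/Right pair indexed by $\g{x_i}{y_i},\g{z_i}{w_i}$ together with $M_i$ and $rej_i$. By linearity it therefore suffices to compute the action on a single product $\ket{g(x_i;y_i)}\ket{g(z_i;w_i)}\ket{0}_{M_i}\ket{0\cdots0}_{rej_i}$ for fixed $x_i,z_i\in\B_p$ and $y_i,w_i\in\{0,1\}$. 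The $n$-round statement then follows by tensoring these results and summing over $J$ with the untouched $\ket{E_{x,y,z,w}}$. Collecting the signs gives $\prod_i(-1)^{w_im_i}=(-1)^{w\cdot m}$, and the normalizations give $(1/\sqrt2)^n$.

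For one round, I would first expand both GHZ states in the computational basis using Equation \ref{eq:prelim:gxy}:
\begin{equation*}
\ket{g(x;y)}\ket{g(z;w)}=\frac12\sum_{a,a'\in\{0,1\}}(-1)^{ya+wa'}\ket{a,\,x\oplus a^p}\ket{a',\,z\oplus a'^p}.
\end{equation*}
The steps then proceed in order.
\begin{enumerate}
\item Alice's DCNOT writes $m=a\oplus a'$ into $M$.
\item Bob$_j$ computes the parity of his Left and Right bits, $(x_j\oplus a)\oplus(z_j\oplus a')=x_j\oplus z_j\oplus m$. He XORs this with the $M$ register, obtaining $x_j\oplus z_j$. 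Since $\cad$ is public, Bob$_j$ applies this check to his $rej$ qubit only when $\cad_j=1$, so the $rej$ register ends in $\ket{(x\oplus z)\wedge\cad}$.
\item The $rej$ value is independent of $a,a'$, so it factors out of the inner sum. This is the key simplification.
\item Reindex by $a'=a\oplus m$ and use $(-1)^{ya+w(a\oplus m)}=(-1)^{wm}(-1)^{(y\oplus w)a}$. This gives
\begin{equation*}
\frac12\sum_{m}(-1)^{wm}\ket{m}_M\sum_{a}(-1)^{(y\oplus w)a}\ket{a,\,x\oplus a^p,\,a\oplus m,\,z\oplus m^p\oplus a^p}.
\end{equation*}
\item Finally, observe that replacing $a=0$ by $a=1$ complements every bit of the $(2p+2)$-bit string $0,x,m,z\oplus m^p$. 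The inner sum is therefore exactly $\sqrt2\,\ket{g(x,m,z\oplus m^p;\,y\oplus w)}$, with Alice's Left qubit as the leading ``Alice'' qubit. Together with the prefactor this yields the claimed $\frac{1}{\sqrt2}\sum_m(-1)^{wm}\ket{m}_M\ket{g(x,m,z\oplus m^p;y\oplus w)}$.
\end{enumerate}

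I expect the main obstacle to be bookkeeping rather than any conceptual difficulty. The subtle points are the following.
\begin{itemize}
\item The qubit ordering in the $(2p+2)$-qubit GHZ state has to be fixed consistently: Alice-Left, the Bobs-Left, Alice-Right, then the Bobs-Right, permuted into place. This ordering is what makes Alice's Right bit appear as the component ``$m$'' of the bit string.
\item One must check that the Bobs' operation is indeed a unitary on computational basis states that leaves the data qubits unchanged. It consists of CNOTs from their two qubits and from $M$ into $rej_j$, applied only when $\cad_j=1$.
\item One must verify that the reject flag truly decouples from $a$, so that the sign and normalization factors factor cleanly across rounds.
\end{itemize}
Once the single-round identity is established, the $n$-round form is immediate from the product structure.
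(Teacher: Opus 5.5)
Your proposal is correct and follows the same route as the paper. Both arguments compute on a single two-round block in the computational basis: Alice's DCNOT writes $a\oplus a'$ into $M_i$, and each Bob with $\cad_j=1$ XORs his two bits and $M_i$ into $rej_i$, leaving the branch-independent flag $(x_i\oplus z_i)\wedge\cad$, after which the terms regroup into $\frac{1}{\sqrt2}\sum_{m}(-1)^{w_i m}\ket{m}_{M_i}\ket{g(x_i,m,z_i\oplus m^p;\,y_i\oplus w_i)}$ and linearity plus the product structure give the $n$-round form. Your explicit reindexing $a'=a\oplus m$ and the complementation check carry out the regrouping step more carefully than the paper's displayed expansion.
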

\begin{proof}
  We first consider a single two-round block of the form $\ket{g(x_i;y_i)}\ket{g(z_i;w_i)}$ and trace the execution of the delayed measurement CAD on this.  Recall that $x_i,z_i\in\{0,1\}^p$ and $y_i,w_i\in\{0,1\}$.  The result will then follow by linearity.  First, an additional, empty, ancilla $\ket{0}_{M_i}\ket{0}_{rej_i}$ is added to the state, where the $M_i$ register consists of one qubit, and the $rej_i$ register consists of $p$ qubits.  Expanding the GHZ states in terms of the $Z$ basis and applying Alice's DCNOT operation (to model her parity computation), results in:
  \begin{align*}
    &\ket{0}_{M_i}\ket{0}_{rej_i}\ket{g(x_i;y_i)}\ket{g(z_i;w_i)} \\
    =&\ket{0}_{M_i}\ket{0}_{rej_i}\left(\frac{1}{2}\ket{0x_i0z_0} + (-1)^{y_i\oplus w_i}\ket{1\bar{x}_i 1\bar{z}_i}\right)\\
    +& \ket{0}_{M_i}\ket{0}_{rej_i}\left((-1)^{w_i}\frac{1}{2}\ket{0x_i1\bar{z}_0} + (-1)^{y_i}\ket{1\bar{x}_i 0z_i}\right)\\
    \mapsto &\ket{0}_{M_i}\ket{0}_{rej_i}\left(\frac{1}{2}\ket{0x_i0z_0} + (-1)^{y_i\oplus w_i}\ket{1\bar{x}_i 1\bar{z}_i}\right)\\
                                                                  +& (-1)^{w_i}\ket{1}_{M_i}\ket{0}_{rej_i}\left(\frac{1}{2}\ket{0x_i1\bar{z}_0} + (-1)^{y_i\oplus w_i}\ket{1\bar{x}_i 0z_i}\right)
  \end{align*}
  Each Bob $j$, will now check the parity of his system, across the ``$x$'' and ``$z$'' portion, to ensure it matches the value in $M_i$, assuming $\cad_j = 1$.  It is not difficult to see that this can also be modeled as each Bob, with $\cad_j = 1$, first applying his own DCNOT operation, and targeting the $j^{th}$ qubit in the $rej_i$ register.  Next, each Bob will apply a standard CNOT controlled on the $M_i$ register, and targeting the $j^{th}$ qubit in $rej_i$.  If the resulting value is a $\ket{0}$ (later, after measurement, of course), Bob$_j$'s signal is to accept this round; otherwise he signals to reject this round.  Of course, the above only occurs for those Bob's where $\cad_j = 1$; all other Bob's leave the $rej_j$ register to a $\ket{0}$ state (it's initial default value).  It is not difficult to see that this process will result in each Bob accepting this round, only if $x_i = z_i$, for all bits of these strings where $\cad_j = 1$.  Thus, the resulting system will be:
  \begin{align*}%
    &\ket{(x_i\oplus z_i) \wedge \cad}_{rej_i}\left(\ket{0}_{M_i}\left(\frac{1}{2}\ket{0x_i0z_0} + (-1)^{y_i\oplus w_i}\ket{1\bar{x}_i 1\bar{z}_i}\right)\right.\\
    & \left.+ (-1)^{w_i}\ket{1}_{M_i}\left(\frac{1}{2}\ket{0x_i1\bar{z}_0} + (-1)^{y_i\oplus w_i}\ket{1\bar{x}_i 0z_i}\right)\right)
  \end{align*}
Note that we permuted subspaces, above, to move the $rej$ register to the left.  Of course, this can be written in terms of GHZ states (now of $2(p+1)$ qubits):
  \begin{align*}
    \ket{(x_i\oplus z_i)\wedge \cad}_{rej_i}\sum_{m_i\in\{0,1\}^n}&\frac{1}{\sqrt{2}}(-1)^{m_i\cdot w_i}\ket{m_i}_{M_i}\\
    &\otimes\ket{g(x_i, m_i, z_i\oplus m_i^p; y_i\oplus w_i)}.
  \end{align*}
Since everything, above, was unitary, the result follows from linearity.
\end{proof}


%% file: eval.tex
\section{Evaluation Results} \label{sec:results}

We next present evaluation results in star topologies, where Alice is in the center of the star, distributing GHZ states to $p$ Bobs, which are the terminal nodes in the star. We vary $p$ to 2, 3, 4, and 7, and vary the channels from Alice to the Bobs to explore different settings. Henceforth, we refer to Bob${_i}$ as $B_i$ for simplicity.


For all our simulations, we specify a value for $Q_{AB_i}$, namely the noise between Alice and Bob$_i$ (see Section \ref{sec:asymproof:final-keyrate}).  To evaluate our key-rate, we assume each link acts independently in our simulation, which allows us to compute $Q_\Delta^Z$ as follows:
\begin{equation}
    Q_\Delta^Z = \prod_{i=1}^p (Q_{AB_i})^{\Delta_i}(1-Q_{AB_i})^{1-\Delta_i}
\end{equation}
Note that, in practice, Alice and the Bobs will sample all these values directly.

First, we compare 
our result when $\cad = 1\cdots 1$ (i.e., when CAD is enabled for all parties) to prior work \cite{krawec2025quantum}.  As seen in Figure \ref{fig:pw-comp}, our new proof of security asymptotically outperforms our earlier work.  Of course, our earlier work did not support the ``Selective'' option, and so we only compare it with $\cad = 1\cdots 1$.  
In this case, 
our new S-CAD protocol is identical to the one in \cite{krawec2025quantum}. The above result shows that our new asymptotic proof yields a tighter key-rate bound.

\begin{figure}
    \centering
    \includegraphics[width=0.48\linewidth]{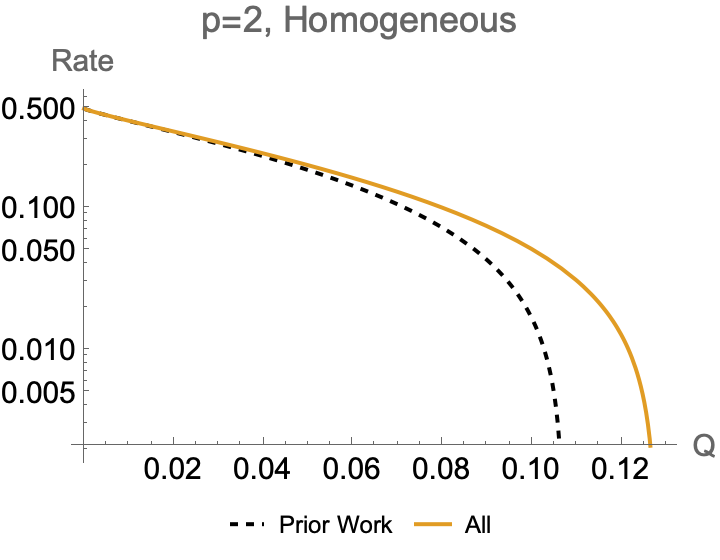}
    \includegraphics[width=0.48\linewidth]{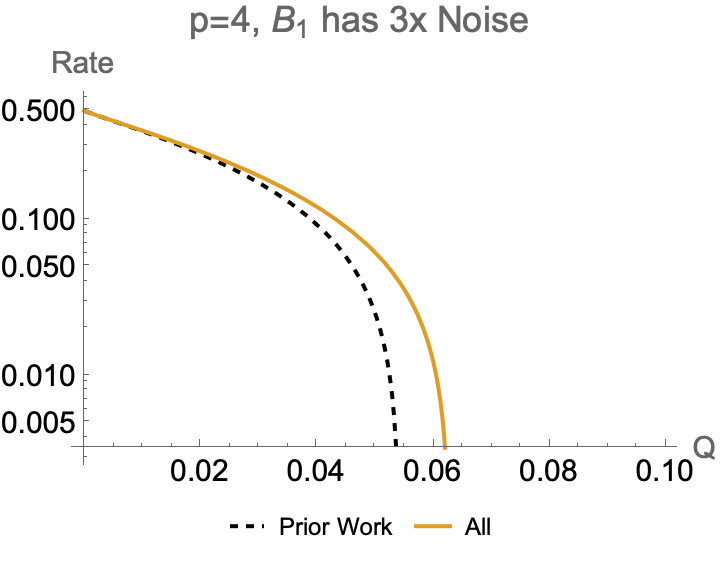}
    \caption{Comparing our work (Solid Line) to prior work in \cite{krawec2025quantum} (Dashed Line).  In all our tests, our new result outperformed prior work, when CAD is enabled for all parties.}
    \label{fig:pw-comp}
\end{figure}

In the following, we evaluate our S-CAD protocol in a variety of settings, and compare it to the original QCKA protocol of \cite{grasselli2018finite} that does not include CAD.
The asymptotic key-rate of this original QKCA protocol
is simply $1 - h(Q_X) - \max_jh(Q_{AB_j})$.

\heading{Results for $p=2$}
Figures~\ref{fig:2Bobs}a, b and c present the results with two Bobs (thus, three parties total), where the channel from Alice to $B_2$ has noise $Q$, while the noise from Alice to $B_1$ is set to $Q$, $2Q$, and $3Q$, respectively. We set $Q_X = Q$.  For each scenario, we compare two cases, when both $B_1$ and $B_2$ perform CAD (i.e., $\cad=11$), and when
only $B_1$ (who has channel noise equal or higher than that of $B_2$) performs CAD (i.e., $\cad = 10$).

We see that in the homogeneous settings, having both Bobs perform CAD is always more beneficial than having only one with CAD enabled.  This makes sense since, when only one Bob uses CAD, there is no advantage in the error correction leakage term (since the maximal noise will be dominated by the party with CAD disabled).  Of course, if the overall noise is low, then turning off CAD entirely is optimal.


In heterogeneous settings, when $B_1$ has twice the amount of noise as $B_2$, we still observe that having both Bobs perform CAD is more beneficial than selective CAD (i.e., only $B_1$ performs CAD). However, when $B_1$ has three times  amount of noise as $B_2$, we observe that selective CAD can outperform the case where all parties use CAD, but only slightly at the high noise scenario.  This makes sense, given the interplay between the probability of acceptance, $p_a$ (Equation \ref{eq:thm3}), and the overall error correction leakage.  Note that the fewer parties using CAD, the higher $p_a$ will be.


In summary, for the three party (two Bob) case, it is generally best to either enable CAD for all Bobs, or disable CAD entirely, depending on the overall noise level.  If one of the two Bobs has significantly higher noise than the other, it may be beneficial to enable CAD only for that one Bob and not the other.

\begin{figure} 
\center{
   \begin{subfigure}{0.8\linewidth}
       \includegraphics[width=\linewidth]{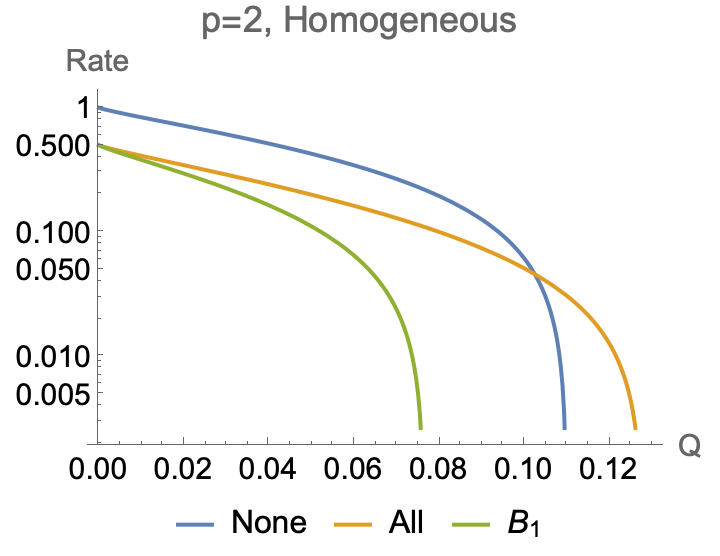}
       \caption{Homogeneous case.}
   \end{subfigure}}
\hfill 
   \begin{subfigure}{0.48\linewidth}
       \includegraphics[width=\linewidth]{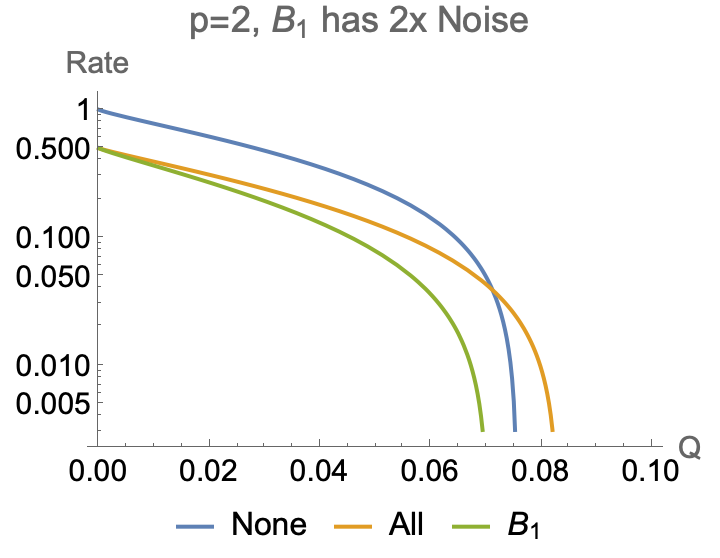}
       \caption{$B_1$ has 2x noise as $B_2$.}
   \end{subfigure}
\hfill 
   \begin{subfigure}{0.48\linewidth}
       \includegraphics[width=\linewidth]{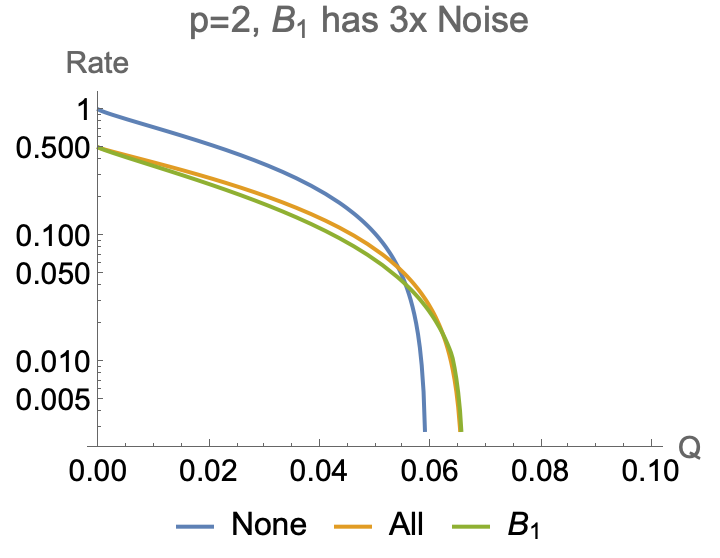}
       \caption{$B_1$ has 3x noise as $B_2$.}
   \end{subfigure}
   \caption{Three party scenario (Alice and two Bobs).  Here, and in other figures, ``None'' is the result of running the original QCKA protocol without CAD, using the key-rate expression from \cite{grasselli2018finite}.
   }
   \label{fig:2Bobs}
\end{figure}


\begin{figure*}[ht] 
   \begin{subfigure}{0.32\textwidth}
       \includegraphics[width=\linewidth]{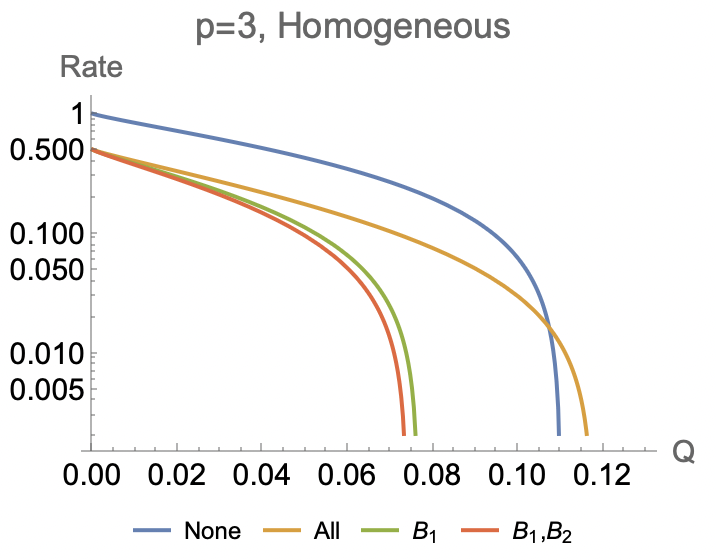}
       \caption{Homogeneous case.}
   \end{subfigure}
\hfill 
   \begin{subfigure}{0.32\textwidth}
       \includegraphics[width=\linewidth]{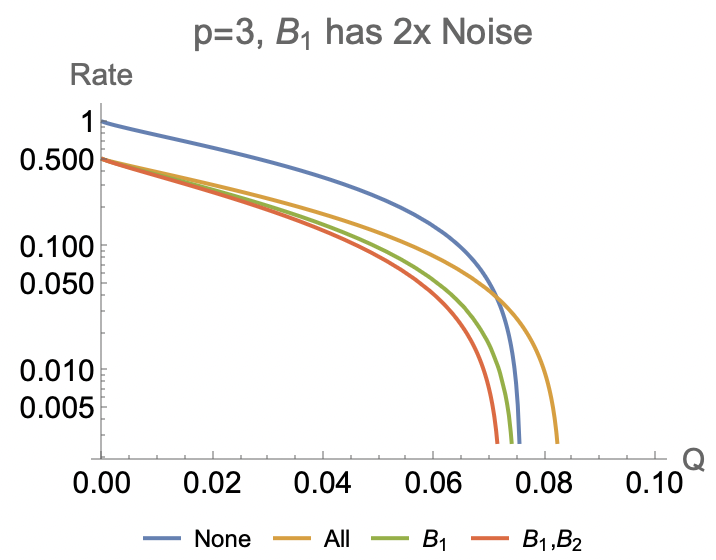}
       \caption{1 bad channel: $B_1$ has noise $2Q$, and all the other Bobs have noise $Q$.}
   \end{subfigure}
\hfill 
   \begin{subfigure}{0.32\textwidth}
       \includegraphics[width=\linewidth]{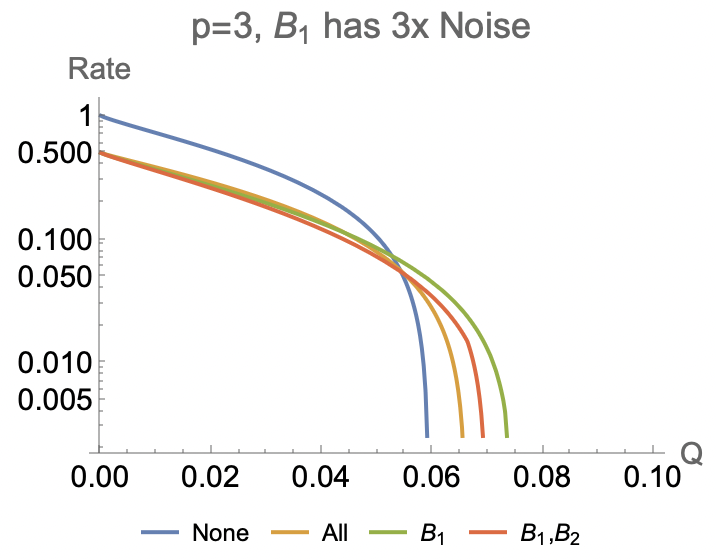}
       \caption{1 bad channel: $B_1$ has noise $3Q$, and all the other Bobs have noise $Q$.}
   \end{subfigure}
      \begin{subfigure}{0.32\textwidth}
       \includegraphics[width=\linewidth]{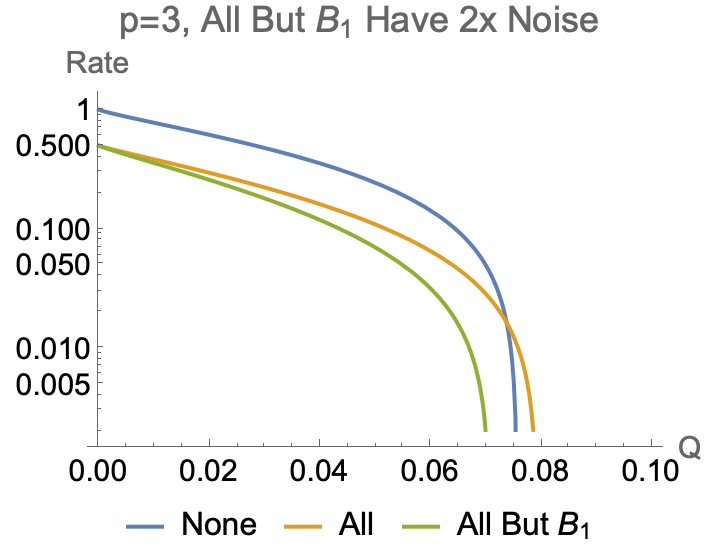}
       \caption{1 good channel: $B_1$ has noise $Q$, and all the other Bobs have noise $2Q$. 
       }
   \end{subfigure}
      \begin{subfigure}{0.32\textwidth}
       \includegraphics[width=\linewidth]{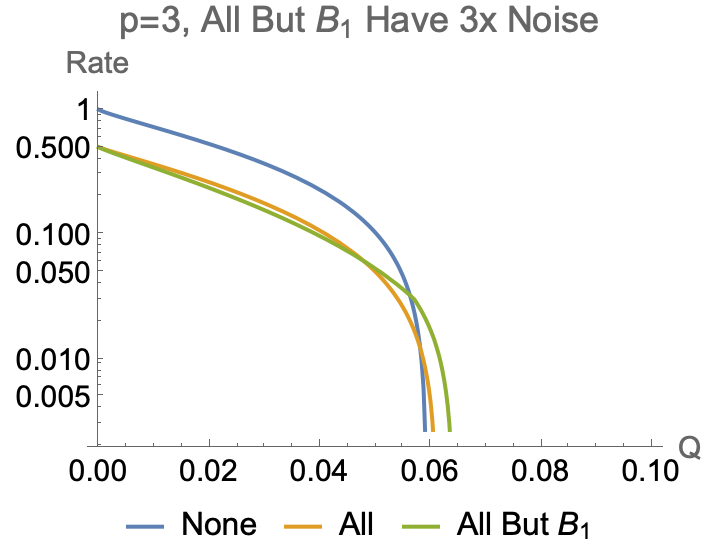}
       \caption{1 good channel: $B_1$ has noise $Q$, and all the other Bobs have noise $3Q$.}
   \end{subfigure}
      \begin{subfigure}{0.32\textwidth}
       \includegraphics[width=\linewidth]{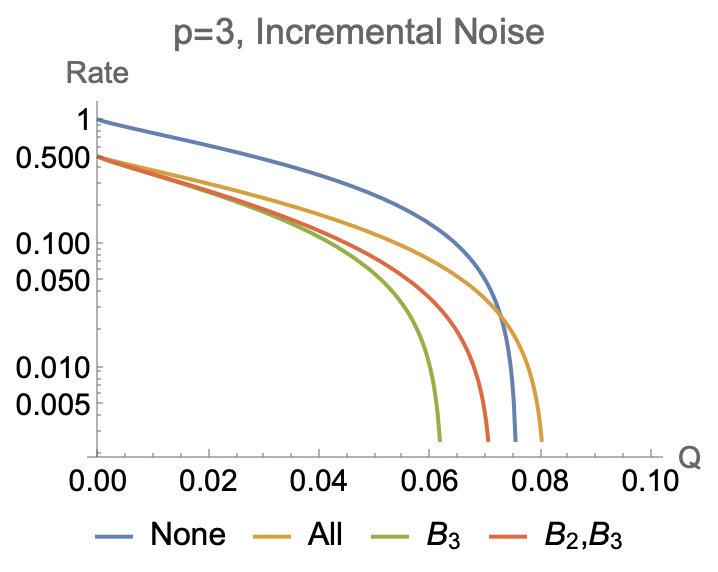}
       \caption{$B_1$, $B_2$ and $B_3$ have noises $Q$, $1.5Q$, and $2Q$, respectively.}
   \end{subfigure}
   \caption{Four party scenario (Alice and three Bobs).
   }
   \label{fig:3Bobs}
\end{figure*}
\begin{figure*}[ht] 
   \begin{subfigure}{0.32\textwidth}
       \includegraphics[width=\linewidth]{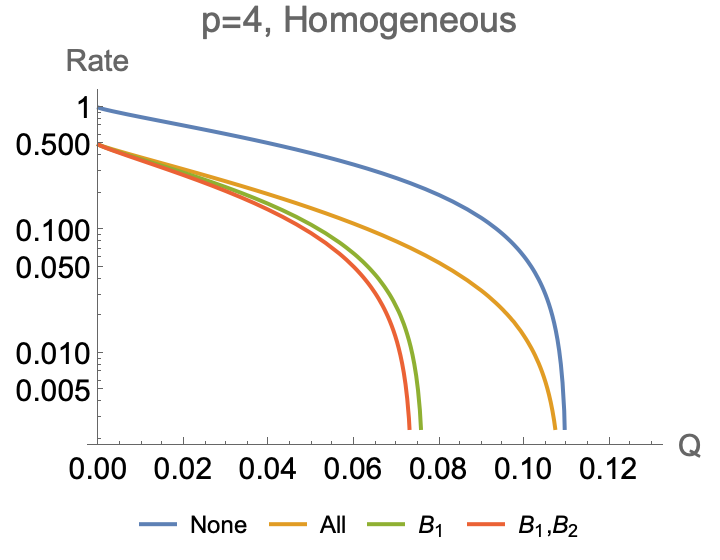}
       \caption{Homogeneous case.}
   \end{subfigure}
\hfill 
   \begin{subfigure}{0.32\textwidth}
       \includegraphics[width=\linewidth]{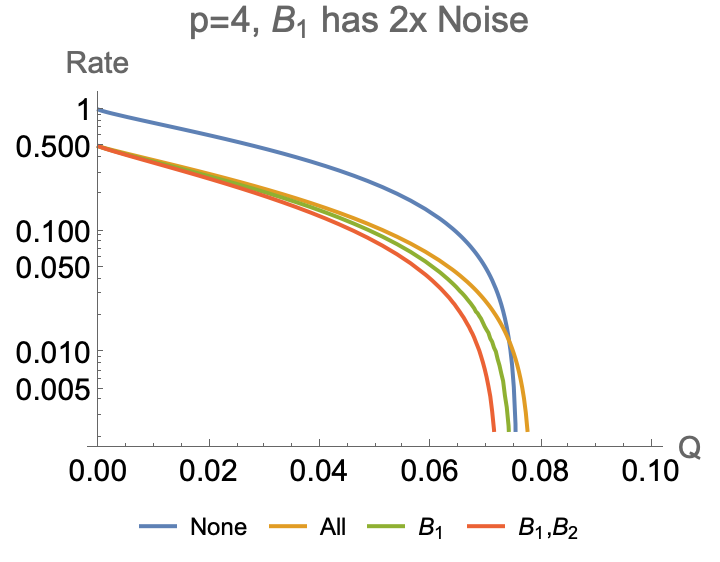}
       \caption{1 bad channel: $B_1$ has noise $2Q$, and all the other Bobs have noise $Q$.}
   \end{subfigure}
\hfill 
   \begin{subfigure}{0.32\textwidth}
       \includegraphics[width=\linewidth]{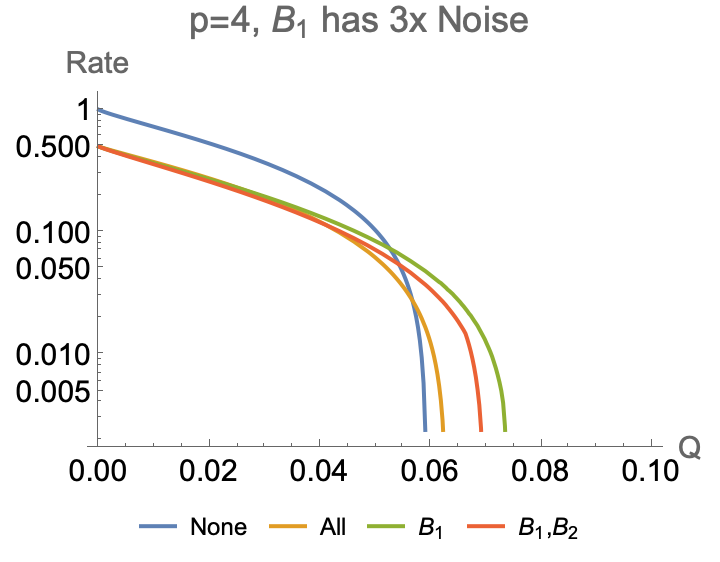}
       \caption{1 bad channel: $B_1$ has noise $3Q$, and all the other Bobs have noise $Q$.}
   \end{subfigure}\\
   %
   \caption{Five party scenario (Alice and four Bobs).
   }
   \label{fig:4Bobs}
\end{figure*}

\heading{Results for $p=3$}
We now present the results when $p=3$ (three Bobs, thus four parties total). Specifically, we consider the following four cases: (i) homogeneous channel, i.e., the channels from Alice to all Bobs have the same noise $Q$, (ii) single bad channel: the channels from Alice to both $B_2$ and $B_3$ have the same noise, $Q$, while the the channel from Alice to $B_1$ has noise set to $2Q$ or $3Q$, (iii) single good channel: the channel from Alice to $B_1$ has noise $Q$, while the the channels from Alice to both $B_2$ and $B_3$ have the same noise, set to $2Q$ or $3Q$, and (iv) incremental noise  setting: the channels from Alice to the three Bobs are set to $Q$, $1.5Q$, and $2Q$ respectively. Again, we set $Q_X = Q$ for each case.  In all the above cases, we consider two strategies: (1) all Bobs perform CAD; and (2) a subset of Bobs have CAD enabled (typically the higher noise Bobs).

We again observe that in the homogeneous setting (Fig.~\ref{fig:3Bobs}a), having CAD enabled for all Bobs produces an optimal result when the channel noise is high.  Note, however, that the advantage compared to not using CAD at all, is not as large
as it is in the $p=2$ case (Fig.~\ref{fig:2Bobs}a).  This is due to the fact that, as the number of parties increase, the probability of acceptance, $p_a$ Equation \ref{eq:thm3}, necessarily decreases.  Thus, in the homogeneous setting, as the number of parties increases, the advantage to using S-CAD in the multi-party setting diminishes.  {We also see that, in the homogeneous case, it is best to either enable CAD for all parties, or disable CAD entirely, depending on the channel noise.  As seen in Fig.~\ref{fig:3Bobs}a, turning on CAD only for $B_1$ severely hampers the protocol's performance.  This, however, makes sense since turning CAD on for only one of the parties will not give any advantage in error correction leakage for  the homogeneous setting (the party with CAD turned off will dominate this term); yet there is a disadvantage in that the probability of accepting $p_a$ will decrease.  When turning on CAD for a subset of parties (see the $B_1B_2$ curve), the argument is similar, however the $p_a$ term decreases even more, thus the $B_1B_2$ curve drops below the $B_1$ curve.  When turning CAD on for all parties, $p_a$ drops more, however now there is an advantage in error correction leakage and, thus, when the noise is high enough, the protocol can outperform the standard ``No CAD'' scenario.}



In the single bad channel case (Figures \ref{fig:3Bobs}b and c), we see that when the noise difference increases (e.g., the noisy link is $3Q$), it becomes optimal to only enable CAD for the one noisy Bob; when the difference is not as extreme (e.g., $2Q$), then it is advantageous to enable CAD across all parties, or to not use CAD at all, depending on $Q$.  
The above results are again due to the impact of error correction leakage and $p_a$.
In Fig.~\ref{fig:3Bobs}c, when the noise is high enough for one party, it makes sense to turn CAD on only for that party, as the error correction leakage will decrease, which will outweigh the disadvantage of a lower accepting probability, $p_a$.
The more parties with CAD enabled, the lower $p_a$ will be, and hence we observe  worse result under $B_1B_2$ (i.e., both $B_1$ and $B_2$ turn on CAD) compared to only $B_1$ uses CAD. On the other hand, compared to the strategy in which all parties turn on CAD, $B_1B_2$ leads to better results due to higher $p_a$ and similar error correction leakage.   

A similar trend is shown in Figures \ref{fig:3Bobs}d through e, where, now, there is one ``good'' channel, and the rest have high noise.  Namely, when the difference in noise is high enough (e.g., $3Q$ and not $2Q$), it is advantageous to disable CAD for the low-noise party, but enable it for all high-noise parties. In other words, selective CAD is more beneficial than all the parties performing CAD when the channels among the parties differ more significantly (see Fig.~\ref{fig:3Bobs}e when the two Bobs have 3x noise than $B_1$), while the opposite is true when their noise levels are not substantially different (see Fig.~\ref{fig:3Bobs}d).


Similar results are observed in Fig.~\ref{fig:3Bobs}f, when $B_1$, $B_2$ and $B_3$ have noises $Q$, $1.5Q$, and $2Q$, respectively.  Since the channels of the three parties do not differ substantially, we see that it is better to enable CAD for all parties, when the noise increases.

\heading{Results for $p=4$}
Similar trends are seen for $p=4$ (four Bobs, thus five parties total), namely that when the noise is significantly higher for some Bobs and not others, it is best to enable CAD only for those noisy Bobs; see Fig. \ref{fig:4Bobs}.  Note that, in the homogeneous case, S-CAD always hurts performance, showing, again, that as the number of parties increases, the advantages to S-CAD 
diminishes, especially in the homogeneous setting.  When there are links with high noise, then S-CAD can still outperform no-CAD, by carefully enabling and disabling CAD for some parties.  

It is an open problem to design a multi-user CAD protocol (or prove that one cannot exist) that can continue to outperform the ``No CAD'' case, for high noise, homogeneous, or nearly homogeneous, networks, as the number of parties greatly increases.  In our protocol, $p_a$ begins to diminish rapidly as the number of parties running CAD increases.  However, in the 
heterogeneous case, with all but a few parties enabling CAD, the diminishing effect of $p_a$, is  outweighed by the increasing advantage in lower error correction leakage from S-CAD.

\heading{Results for $p=7$}
We see the same trends exhibit themselves for a larger number of parties, when $p=7$, in Fig.~\ref{fig:7Bobs}.  Specifically, in the homogeneous case, it is best to disable CAD entirely.  However, when there are some parties who have a significantly higher error rate, it is advantageous to enable CAD for that party, and disable it for the others.

\begin{figure} 
\center{
   \begin{subfigure}{0.48\linewidth}
       \includegraphics[width=\linewidth]{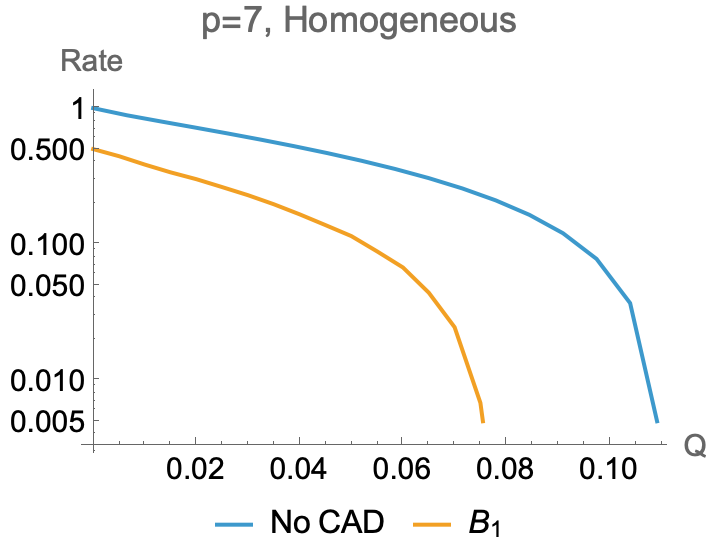}
       \caption{Homogeneous case.}
   \end{subfigure}}
\hfill 
   \begin{subfigure}{0.48\linewidth}
       \includegraphics[width=\linewidth]{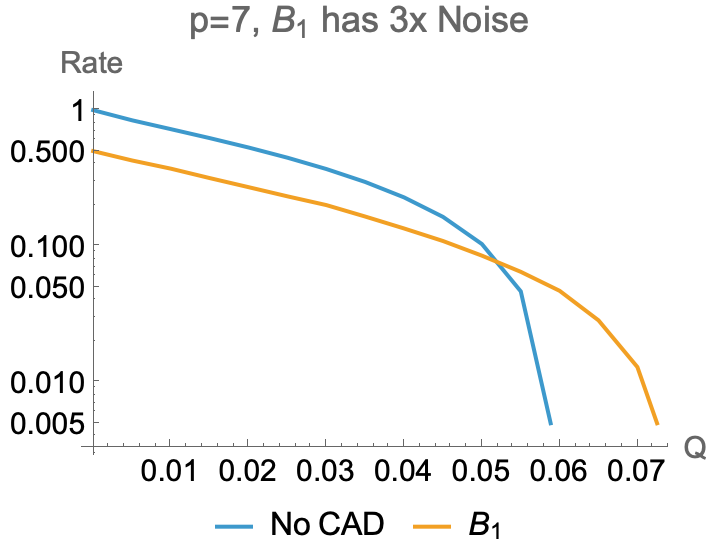}
       \caption{1 bad channel with $3Q$ noise}
   \end{subfigure}
\hfill 
   \caption{Results for $p=7$ (eight parties total).
   }
   \label{fig:7Bobs}
\end{figure}

%% file: closing.tex
\section{Closing Remarks}

In this paper, we introduced S-CAD, a novel CAD protocol  specifically designed for multi-user QCKA protocols.  We derived an information theoretic proof of security for this protocol against general attacks, and computed its asymptotic key-rate.  Finally, we evaluated our work in a variety of settings, discovering important lessons on when CAD can be beneficial, and when it should be disabled.

Many interesting future problems remain open.  Designing a more efficient S-CAD protocol would be interesting.  For instance, for any Bob who has CAD disabled, can his Right qubit be used for some other purpose (maybe random number generation, or a ``sub-group'' key)?  It would also be interesting to design a protocol that does not require information on $Q^Z_\Delta$ for all $\Delta\in\{0,1\}^p$, as this could improve practical performance.


\heading{Acknowledgments}
WOK and TT would like to acknowledge support from the NSF under grant number 2143644.
